\newtheorem{definition}{Definition}
\newtheorem{lemma}{Lemma}
\newtheorem{theorem}{Theorem}
\newcommand{\eref}[1]{(\ref{#1})}
\newcommand{\sref}[1]{Section~\ref{#1}}
\newcommand{\appref}[1]{Appendix~\ref{#1}}
\newcommand{\fref}[1]{Figure~\ref{#1}}
\newcommand{\cref}[1]{Constraint~\ref{#1}}
\newcommand{\thref}[1]{Theorem~\ref{#1}}
\newcommand{\lref}[1]{Lemma~\ref{#1}}
\newcommand{\ignore}[1]{}
\begin{document}

\title{\vspace{-.5cm}Delivery Time Reduction for Order-Constrained Applications using Binary Network Codes}
\author{
\authorblockN{Ahmed Douik$^{\dagger}$, Mohammad S. Karim$^\ast$, Parastoo Sadeghi$^\ast$, and Sameh Sorour$^{\prime}$\\}%
\authorblockA{$^\dagger$California Institute of Technology (Caltech), California, United States of America \\
$^\ast$The Australian National University (ANU), Australia \\
$^\prime$King Fahd University of Petroleum and Minerals (KFUPM), Kingdom of Saudi Arabia \\
Email: $^\dagger$ahmed.douik@caltech.edu $^\ast$\{mohammad.karim,parastoo.sadeghi\}@anu.edu.au} $^{\prime}$samehsorour@kfupm.edu.sa
\vspace{-.8cm} }

\maketitle

\pagestyle{empty}
\thispagestyle{empty}

\IEEEoverridecommandlockouts

\begin{abstract}
Consider a radio access network wherein a base-station is required to deliver a set of order-constrained messages to a set of users over independent erasure channels. This paper studies the delivery time reduction problem using instantly decodable network coding (IDNC). Motivated by time-critical and order-constrained applications, the delivery time is defined, at each transmission, as the number of undelivered messages. The delivery time minimization problem being computationally intractable, most of the existing literature on IDNC propose sub-optimal online solutions. This paper suggests a novel method for solving the problem by introducing the delivery delay as a measure of distance to optimality. An expression characterizing the delivery time using the delivery delay is derived, allowing the approximation of the delivery time minimization problem by an optimization problem involving the delivery delay. The problem is, then, formulated as a maximum weight clique selection problem over the IDNC graph wherein the weight of each vertex reflects its corresponding user and message's delay. Simulation results suggest that the proposed solution achieves lower delivery and completion times as compared to the best-known heuristics for delivery time reduction.
\end{abstract}

\begin{keywords}
Instantly decodable network coding, order-constrained, delivery time, delivery delay, maximum weight clique.
\end{keywords}

\section{Introduction} \label{sec:int}

Various real-time applications in communication, e.g., cellular transmissions, video streaming, and satellite communications, require a considerable radio resources, i.e., bandwidth, transmission energy. To enhance the performance of such systems, network coding (NC), introduced in \cite{850663}, is a propitious solution that mixes the different information flows in the network \cite{6512065}. By achieving maximum information flow in a network \cite{4557282,4895447,1705002}, NC enables high-rate and reliable communications over fading channels.

While popular NC schemes, e.g., random linear network coding (RLNC) \cite{6195456,6714456,7070749}, focus only on achieving the maximum throughput in a network, they are not suitable for real-time applications of interest in this paper. For example, RLNC offers the optimal broadcast performance at the expense of a substantial decoding delay as decoding is possible only after the reception of a sufficient number of independently coded packets. However, many applications are time-critical and require in-order packet delivery as packets can be delivered to the applications only if all its preceding packets are decoded and delivered. Such applications include real-time scalable video streaming and cloud-enabled networks in which communications representing software commands need to be executed sequentially. A suitable NC technique to meet the aforementioned delay and message's order requirements is the instantly decodable network coding (IDNC) \cite{6882208,5502758,13051412,4476183,6570827,5072357,512548}

In IDNC, messages are encoded using the binary field $\mathds{F}_2$, i.e., messages are mixed using binary XOR. Such encoding field size allows efficient XOR-based decoding at the users by overcoming the expensive computations, e.g., large matrices inversion in RLNC. Such instant decodability property, not only reduces the decoding complexity but also enables the design of cost-efficient receivers.

For its aforementioned desirable properties, IDNC attracted a significant number of works. The authors in \cite{6882208,5502758,13051412} consider reducing the number of transmissions to complete the reception of the messages by all users. Such metric, known as the completion time, is desirable in applications without order constrains for its inverse relationship with the throughput. However, the metric is not suitable for order-constrained applications as out-of-order decoded messages are buffered but not delivered to the application. For real-time applications, the authors in \cite{4476183,6570827} propose serving the maximum number of users with any new message at each transmission. However, such approach is inefficient for order-constrained applications. For video streaming applications, reference \cite{5072357} suggests a video-aware packet selection algorithm that prioritizes messages based on their contribution to the overall video quality.

Consider a radio access network wherein a base-station is required to deliver a set of ordered messages to a set of users over independent erasure channels. The aim of this paper is to study the delivery time reduction problem in IDNC-based networks wherein the delivery time is incremented for each undelivered message irrespective of its decoding status. In an RLNC context, the authors in \cite{5437470,6214165} propose schemes that achieve the optimal asymptotic and a non-asymptotic satisfactory delivery time, respectively. Furthermore, the delivery time reduction problem considered in this paper is closely related to the concept developed in \cite{512548}. However, the authors in \cite{512548} formulate the optimal schedule that reduces the delivery time as a stochastic shortest path (SSP). For its high computational complexity, i.e., exponential in both the number of users and messages, they propose a simple packet selection heuristic.

This paper's main contribution is to propose a novel method for solving the delivery time reduction problem in IDNC-based networks. The delivery delay is first introduced as a measure of degradation as compared to optimal coding strategy. An expression characterizing the delivery time using the delivery delay is derived and used to approximate an anticipated version of the delivery time. Afterward, the problem is reformulated using the delay-dependent expression. The paper shows that the solution is equivalent to a maximum weight clique search over the IDNC graph wherein the weight of each vertex reflects its corresponding user and message's delay. Simulation results show appreciable performance gain and suggest that the proposed solution achieves a lower delivery and completion times as compared to the best-known heuristic \cite{512548,13051412,6570827} for delivery time reduction.

The rest of this paper is organized as follows: The system model and problem formulation are presented in \sref{sec:sys}. \sref{sec:del} introduces the delivery time approximation and reformulates the problem. The proposed solution is illustrated in \sref{sec:pro}. Before concluding in \sref{sec:con}, \sref{sec:sim} discusses the simulation results.

\section{System Model and Problem Formulation} \label{sec:sys}

\subsection{System Model and Parameters}

Consider the downlink of a radio access network with a single base-station (BS). The BS is required to deliver a set $\mathcal{M}$ of $M$ ordered messages\footnote{The term message, in this paper, denotes a generic packet that can represent a frame from a video stream, an executable instruction, and so on.} to a set $\mathcal{U}$ of $U$ users. Each user is interested in receiving all the messages of $\mathcal{M}$ in order. Out-of-order decoded messages are not delivered to the users' application layer but rather stored in their buffers. In other words, the $j$-th message, successfully received and decoded by the $u$-th user, is considered delivered to that user if and only if all previous messages $k < j$ are decoded and delivered.

\begin{figure}[t]
\centering
% Requires \usepackage{graphicx}
\includegraphics[width=0.8\linewidth]{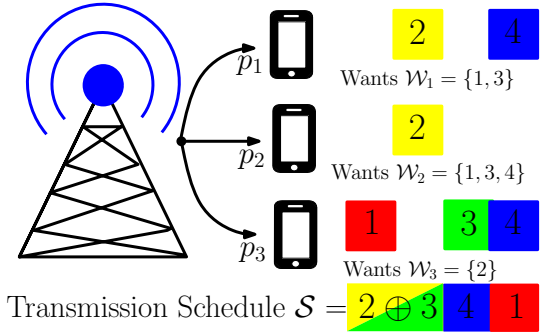}\\
\caption{A network composed of $3$ users and $4$ messages. The message combination $2 \oplus 3$ is instantly decodable for user $2$ but is out-of-order. The message is decoded and stored in the buffer resulting in a delivery time of $4$. The transmission schedule $\{2 \oplus 3,4,1\}$ results in an overall delivery time of $9$ and a completion time of $3$.}\label{fig:example}
\end{figure}

At each time slot, the BS broadcasts XOR combination of the source messages to the users. The transmission is subject to independent erasure at the different users. Let $p_u$ be the message erasure probability of the $u$-th user, assumed to be perfectly known to the BS and to remain constant during a single transmission. Each user that successfully receives a message sends an acknowledgment to the BS. This paper assumes perfect feedback reception that can be achieved through a proper choice of the modulation and frequency of the control channel. After each transmission, messages can be in the following sets of each user:
\begin{itemize}
\item The \emph{Has} set $\mathcal{H}_u$ including the messages received by the $u$-th user. In \fref{fig:example}, the Has set of user $3$ is $\mathcal{H}_3=\{1,3,4\}$.
\item The \emph{Wants} set $\mathcal{W}_u = \mathcal{M} \setminus \mathcal{H}_u$ including the messages wanted by the $u$-th user. In \fref{fig:example}, the Wants set of user $3$ is $\mathcal{W}_3=\{2\}$.
\item The \emph{Delivered} set $\mathcal{D}_u \subseteq \mathcal{H}_u$ including the messages delivered to the $u$-th user's application layer. In \fref{fig:example}, while the Has set of user $3$ is $\mathcal{H}_3=\{1,3,4\}$, its Delivered set is equal to $\mathcal{D}_3=\{1\}$.
\end{itemize}

Let $W_u^k \in \mathcal{W}_u$ denotes the $k$-th wanted message by the $u$-th user, e.g., in \fref{fig:example}, $W_3^1=2$ is the first wanted message by the third user, and $W_2^3=4$ is the third wanted message by the second user. The base-station exploits the diversity of Has and Wants sets of the different users to broadcast XOR combinations of the source messages. A message combination is instantly decodable for a user if it contains exactly one source message from its Wants set.

\subsection{Delivery and Completion Times}

This subsection defines two metrics, namely the completion time and the delivery time. First defined a schedule $\mathcal{S}$ as a collection of message combinations to be transmitted. For example, \fref{fig:example} represents a schedule $\{2\oplus 3,4,1\}$ containing $3$ message combinations. Further, let $\mathbf{S}$ be the set of all possible schedules.
\begin{definition}[Completion Time]
The completion time $\mathcal{C}(\mathcal{S})$ experienced after sending the schedule $\mathcal{S}$ is the number of transmissions required to deliver all messages to all users.
\end{definition}

The completion time reflects the minimum number of transmissions to complete the reception of the messages by all users, e.g., the completion time of the system and the schedule illustrated in \fref{fig:example} is $3$. However, such metric does not consider the order constraint of the messages and thus, is not suitable for order-constrained applications. To account for messages' order, the delivery time is defined as follows:
\begin{definition}[Delivery Time]
The delivery time $T_u(\mathcal{S})$ of the $u$-th user increases at each transmission by one unit for each undelivered message. In other words, the delivery time increases by $|\mathcal{M} \setminus \mathcal{D}_u|=M-W_u^1+1$ at each transmission before the completion time $\mathcal{C}(\mathcal{S})$.
\end{definition}

\begin{definition}[Overall Delivery Time]
The overall delivery time $\mathbf{T}(\mathcal{S})$ experienced after transmitting the schedule $\mathcal{S}$ is the sum of the delivery times of all users over all the transmissions until the completion time.
\end{definition}
The delivery time incorporates the messages' order by penalizing users for each undelivered message even if correctly decoded. For example, the overall delivery time of the system and the schedule illustrated in \fref{fig:example} is $9$. As transmissions order is of great importance, the delivery time is largely affected by it, e.g., while all three schedules $\{2\oplus 3,4,1\}$, $\{2\oplus 3,1,4\}$, and $\{1, 2\oplus 3,4\}$ in \fref{fig:example} achieve an equal completion time of $3$, their corresponding delivery times are $9$, $7$, and $10$, respectively.

\subsection{Problem Formulation}

The problem of finding the optimal schedule so as to minimize the delivery time in an IDNC-based system can be expressed as follows:
\begin{align}
\mathcal{S}^* = \arg \min_{\mathcal{S} \in \mathbf{S}} \mathbf{T}(\mathcal{S}) = \arg \min_{\mathcal{S} \in \mathbf{S}} \sum_{t=1}^{\mathcal{C}(\mathcal{S})} \sum_{u \in \mathcal{U}} T_u(t).
\label{eq:1}
\end{align}

It can readily be seen that finding the optimal schedule, i.e., the solution to the optimization problem \eref{eq:1}, is computationally intractable. Indeed, the dynamic nature of transmissions makes the problem anti-causal as the decision depends on future channel realizations and hence on future coding opportunities. Furthermore, the optimization is highly complex even for erasure free scenarios as it requires a search for all possible patterns of lost/received messages resulting in a complexity of order $2^{UM}$. The authors in \cite{512548} propose an SSP framework to reformulate the optimal schedule selection problem \eref{eq:1}. Given the high computational complexities of solving the SSP formulation, the characteristics of the SSP formulation are utilized to design a simple delivery time reduction heuristic. This paper suggests a novel method for solving the optimization problem \eref{eq:1} by introducing the delivery delay as a measure of degradation as compared to optimal coding strategy. Afterward, the problem is reformulated using a delivery time-delay dependent expression into a maximum weight clique selection problem in the IDNC graph.

\section{Delivery Time Reduction} \label{sec:del}

This section approximates the delivery time reduction problem by introducing the delivery delay. In particular, it first defines the delivery delay and derives an expression of the delivery time involving the delivery delay. It, then, proposes an anticipated version of the delivery time and approximates the minimum delivery time problem using such delivery delay dependent expression.

\subsection{Delivery Delay}

The delivery delay is introduced as a measure of degradation as compared to the optimal coding strategy. To define such delay, the following lemma characterizes the minimum delivery time of user for erasure free transmissions:
\begin{lemma}
Given any schedule $\mathcal{S}$, the minimum delivery time $\overline{W}_u$ for the $u$-th user is given by the following expression\footnote{The index $u$ in $\overline{W}_u$ is useful for studying scenarios wherein users initially hold a subset of $\mathcal{M}$, e.g., index coding problem \cite{1638566}. In such configuration, the minimum delivery time is different for different users based on their initially possessed packets and thus, $\overline{W}_u$ in \eref{eq:2} is also different for different users. However, the rest of the analysis holds.}:
\begin{align}
\overline{W}_u = \cfrac{M(M-1)}{2}.
\label{eq:2}
\end{align}
\label{lem1}
\end{lemma}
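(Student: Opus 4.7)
My approach is to recast the delivery time as a per-message sum and then bound it via the order constraint together with IDNC's instant-decodability property. Let $\tau_u^k$ denote the transmission index at which message $k$ enters $\mathcal{D}_u$. Since $T_u$ is incremented after each transmission by $|\mathcal{M}\setminus\mathcal{D}_u|$, exchanging the order of summation between transmissions and messages yields the identity
\begin{align*}
T_u(\mathcal{S}) \;=\; \sum_{k=1}^{M}\bigl(\tau_u^k - 1\bigr),
\end{align*}
i.e., message $k$ contributes one unit to $T_u$ at each transmission completed before $\tau_u^k$.

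To lower bound each $\tau_u^k$, note that by the order constraint message $k$ cannot be in $\mathcal{D}_u$ at transmission $\tau_u^k$ unless messages $1,\ldots,k-1$ are too; hence all of $1,\ldots,k$ must already sit in $\mathcal{H}_u$ by that time. IDNC's instant decodability permits at most one new message to be added to $\mathcal{H}_u$ per transmission, so collecting $k$ messages requires at least $k$ transmissions, giving $\tau_u^k \geq k$. Substituting into the identity, $T_u(\mathcal{S}) \geq \sum_{k=1}^{M}(k-1) = M(M-1)/2$ for every schedule $\mathcal{S}$.

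For tightness, in the erasure-free setting assumed by the lemma the uncoded in-order schedule that broadcasts message $k$ at time $k$ saturates $\tau_u^k = k$ for every user and every message, attaining the bound. The main subtlety I expect is the bookkeeping in the first step: the excerpt leaves slightly ambiguous whether $|\mathcal{M}\setminus\mathcal{D}_u|$ is counted pre- or post-transmission, and the identity $T_u=\sum(\tau_u^k-1)$ requires the convention that message $k$ is charged once per transmission completed while it is still undelivered. Once that convention is fixed, both the lower bound and the achievability argument reduce to a one-line combinatorial summation, with the IDNC instant-decodability property and the order constraint doing all of the real work.
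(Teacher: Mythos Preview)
Your proof is correct and in fact supplies more than the paper does. The paper's argument is a two-line computation: it simply asserts that sequential in-order transmission achieves the minimum (``it can readily be seen''), and then sums the per-transmission increments $M-t$ for $t=1,\ldots,M$ to obtain $M(M-1)/2$. No lower-bound argument is given; optimality of the in-order schedule is treated as self-evident.

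Your route is genuinely different. By rewriting $T_u(\mathcal{S})=\sum_{k=1}^{M}(\tau_u^k-1)$ via an exchange of summation and then bounding $\tau_u^k\geq k$ from the order constraint plus the one-new-message-per-transmission property of IDNC, you actually \emph{prove} the lower bound that the paper only asserts. The achievability step (in-order uncoded schedule) is the same in both. What your approach buys is rigor: it makes explicit why no schedule can beat $M(M-1)/2$, and the per-message decomposition also exposes exactly where the IDNC constraint enters. What the paper's approach buys is brevity. Your caveat about the pre-/post-transmission accounting is well placed; the paper's own computation (the increment $M-t$ at step $t$) corresponds to counting $|\mathcal{M}\setminus\mathcal{D}_u|$ after the transmission, which is precisely the convention under which your identity $T_u=\sum_k(\tau_u^k-1)$ holds.
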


\begin{proof}
It can readily be seen that the minimum delivery time of the $u$-th user is achieved by transmitting the ordered messages sequentially. Assuming an erasure free scenario, the $t$-th transmission results in a successful delivery of the $t$-th message and an increase of $M-t$ in the delivery time. Therefore, the $M$ transmissions, required to complete the reception of all $M$ messages by the $u$-th user, translate in a minimum delivery time of $\overline{W}_u = \frac{M(M-1)}{2}$.
\end{proof}

The fundamental concept in defining the delivery delay is to measures the degradation as compared to the minimum delivery time. In other words, delivery time $T_u(\mathcal{S})$ experienced by the $u$-th user as a result of transmitting the schedule $\mathcal{S}$ is equal to the minimum delivery time $\overline{W}_u$ and the additional delivery delay $D_u(\mathcal{S})$ experienced by that user from schedule $\mathcal{S}$. Therefore, the delivery time and delay satisfy the following equation in erasure free scenarios:
\begin{align}
T_u(\mathcal{S}) = \overline{W}_u + D_u(\mathcal{S}).
\label{eq:3}
\end{align}

Given the constraint stated in \eref{eq:3}, the delivery delay is defined as follows:
\begin{definition}[Delivery Delay]
The delivery delay $D_u(t,\kappa)$ of the $u$-th user, at the $t$-th transmission, increases after the reception of the message combination $\kappa$ by the following quantity:
\begin{align}
D_u(t,\kappa) =
\begin{cases}
W_u^k - W_u^1 \hspace{0.5cm} &\text{if } \kappa \cap \mathcal{W}_u = W_u^k \\
M - W_u^1 + 1 \hspace{0.5cm} &\text{otherwise}
\end{cases}
\label{eq:4}
\end{align}
\end{definition}
In other words, the delivery delay increases by $W_u^k - W_u^1$ if the $k$-th wanted message by the $u$-th user is received. Otherwise, it increases by $M - W_u^1 + 1$. The following theorem characterizes the delivery time using a delivery delay dependent expression:
\begin{theorem}
Given any schedule $\mathcal{S}$, the delivery time $T_u(\mathcal{S})$ of the $u$-th user can be approximated by the following expression involving the delivery delay:
\begin{align}
T_u(\mathcal{S}) \approx \cfrac{\overline{W}_u + D_u(\mathcal{S})}{1-p_u}.
\label{eq:5}
\end{align}
\label{th1}
\end{theorem}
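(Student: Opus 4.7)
The plan is to lift the deterministic identity \eref{eq:3}, valid in the erasure-free setting, to the noisy channel by dilating the time axis according to the erasure probability $p_u$.

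First I would invoke \lref{lem1} together with the delivery delay definition to argue that, in the absence of erasures, every scheduled transmission is received by user~$u$ and contributes exactly $|\mathcal{M}\setminus\mathcal{D}_u|$ to $T_u(\mathcal{S})$. Summing these contributions across the $M$ transmissions needed to serve user~$u$, and decomposing each summand as the optimal per-transmission increment (captured by $\overline{W}_u$) plus the additional penalty captured by \eref{eq:4}, one recovers $T_u(\mathcal{S})=\overline{W}_u+D_u(\mathcal{S})$, i.e., equation \eref{eq:3}.

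Next I would account for erasures. Because the channel erases each transmission to user~$u$ independently with probability $p_u$, the number of attempts needed for a given packet to be received at user~$u$ is geometric with mean $1/(1-p_u)$. Grouping the transmissions of the noisy schedule into reception phases — one phase per successful reception at $u$ — the delivered set $\mathcal{D}_u$ is unchanged throughout the failed attempts of a phase, so each such failure contributes the same $|\mathcal{M}\setminus\mathcal{D}_u|$ to $T_u(\mathcal{S})$ as the eventually successful attempt of that phase. Linearity of expectation then multiplies every per-transmission term of the noise-free sum by $1/(1-p_u)$, producing
\begin{equation*}
T_u(\mathcal{S})\approx \frac{\overline{W}_u+D_u(\mathcal{S})}{1-p_u}.
\end{equation*}

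The main obstacle, and the reason the conclusion is only approximate, is that the schedule may be adaptive: the packet selected for the next slot depends on the random feedback induced by the erasure pattern, so the reception-phase decomposition is not perfectly aligned with the noise-free sequence of useful packets. I would handle this by conditioning on the intended transmission sequence and arguing, via a renewal-style bound, that the residual stochastic mismatch between phases is of lower order than the leading $1/(1-p_u)$ dilation, thereby justifying the $\approx$ in \eref{eq:5}.
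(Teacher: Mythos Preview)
Your erasure-free step is essentially the paper's argument, though the paper makes the decomposition more explicit: it first treats a schedule consisting solely of instantly decodable messages for user~$u$, writes $T_u(\mathcal{S})=\sum_{t=1}^{M-1}(M-W_u^1(t)+1)$, adds and subtracts $\kappa(t)$ to split the sum into $\overline{W}_u$ plus $\sum_t(\kappa(t)-W_u^1(t))=D_u(\mathcal{S})$, and then handles non-instantly-decodable slots by splitting $\mathcal{S}=\mathcal{S}_p\cup\mathcal{S}_s$ and checking that each slot of $\mathcal{S}_s$ contributes exactly the second branch of \eref{eq:4}. Your write-up glosses over this last case; you should say explicitly why the ``otherwise'' branch of \eref{eq:4} matches the per-slot increment $M-W_u^1+1$ for non-instantly-decodable transmissions.

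For the erasure step you take a genuinely different route. The paper does \emph{not} dilate time via geometric reception phases. Instead it keeps the schedule fixed, writes the exact identity $T_u(\mathcal{S})=\overline{W}_u+D_u(\mathcal{S})+\mathcal{E}_u(\mathcal{S})$ with $\mathcal{E}_u(\mathcal{S})=\sum_t (M-W_u^1(t)+1)X_u(t)$ and $X_u(t)$ the Bernoulli erasure indicator, computes $\mathds{E}[\mathcal{E}_u(\mathcal{S})]=p_u\,T_u(\mathcal{S})$, and then makes the single approximation $\mathcal{E}_u\approx\mathds{E}[\mathcal{E}_u]$ to obtain a self-referential relation that solves to \eref{eq:5}. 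This is cleaner than your phase argument: in the IDNC model the base station does not repeat a packet until user~$u$ receives it, so ``one phase per successful reception at $u$'' does not correspond to anything in the actual schedule, and you then need the renewal-style patch in your last paragraph to repair the mismatch. The paper's route avoids that detour entirely; the only approximation is replacing a random sum by its mean, and no adaptivity argument is required.
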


\begin{proof}
To demonstrate the theorem, the relationship is first established for an erasure free scenario, i.e., the delivery time is shown to satisfy the constraint defined in \eref{eq:3}. Such expression is shown while considering solely instantly decodable transmissions. The delay emanating from non-instantly decodable messages is then added to validate the expression proposed in \eref{eq:3}. Finally, the relationship is extended to the transmissions with erasure by approximating the additional delivery time caused by message erasure events. A complete proof can be found in \appref{app1}.
\end{proof}

The rest of this paper uses the approximation in \eref{eq:5} with equality as it indeed holds for erasure free scenarios, as shown in \eref{eq:3}, and for a large number of transmissions.

\subsection{Problem Reformulation}

As discussed in \sref{sec:sys}, the delivery time minimization problem is computationally intractable. Therefore, this subsection proposes approximating the problem by an online optimization problem involving an anticipated version of the delivery time.

Let $T_u(t)$ be the anticipated delivery time of the $u$-th user at the $t$-th transmission. Such quantity approximates the expected delivery time of the $u$-th user at the $t$-th transmission and can be defined as follows:
\begin{align}
T_u(t) = \cfrac{\overline{W}_u + D_u(t)}{1-p_u},
\end{align}
where $D_u(t)$ is the cumulative delivery delay experienced by the $u$-th user from the first until the $t$-th transmission. It can be seen that the anticipated delivery time $T_u(t)$ is equal to the individual delivery time $T_u(\mathcal{S})$ if the $u$-th user does not experience any additional delivery delay in future transmissions.

This subsection, now, proposes approximating the delivery time reduction problem \eref{eq:1} by the following online optimization problem over the message combination $\kappa$:
\begin{align}
\kappa^* = \arg \min_{\kappa \in \mathcal{P}(\mathcal{M})} \sum_{u \in \mathcal{U}} T_u(t,\kappa),
\label{eq:7}
\end{align}
where $\mathcal{P}(\mathcal{M})$ represents the power-set of the set of messages $\mathcal{M}$.

\section{Proposed Solution} \label{sec:pro}

This section suggests finding the optimal message combination that minimizes the expected delivery time, i.e., online delivery time reduction problem \eref{eq:7}. To represent, in one unified framework, all possible message combinations and the users to whom each message combination is intended, this section first presents the IDNC graph. Afterward, the optimization problem \eref{eq:7} is reformulated as a maximum weight clique selection problem wherein the weight of each vertex in the IDNC graph represents the delivery delay of its user and message combination.

The IDNC graph $\mathcal{G}(\mathcal{V},\mathcal{E})$ is a tool introduced in \cite{5683677} to represent all feasible message combinations and the users to whom the transmission is instantly decodable. The set of vertices is constructed by generating a vertex $v \in \mathcal{V}$ for each couple of user and wanted message, i.e., a vertex $v_{um}$ is produced for each user $u \in \mathcal{U}$ and wanted message $m \in \mathcal{W}_u$. An edge $e \in \mathcal{E}$ is generated for each two vertices $v_{um}$ and $v_{u^\prime m^\prime}$ when the combination of the messages $m$ and $m^\prime$ is instantly decodable to both users $u$ and $u^\prime$. From the instant decodability constraint of IDNC, it can readily be seen that two vertices $v_{um}$ and $v_{u^\prime m^\prime}$ are adjacent if one of the following two options is true:
\begin{itemize}
\item $m = m^\prime$: The same message is requested by two different users and thus the combination is instantly decodable for both users.
\item $m \in \mathcal{H}_{u^\prime}$ and $m^\prime \in \mathcal{H}_{u}$: Both users $u$ and $u^\prime$ can XOR the combination $m \oplus m^\prime$ to retrieve the messages $m$ and $m^\prime$, respectively.
\end{itemize}

Given the IDNC graph formulation above, the following theorem characterizes the solution to the delivery time reduction problem \eref{eq:7}:
\begin{theorem}
The optimal message combination the base-station can generate at the $t$-th transmission so as to reduce the anticipated delivery time proposed in \eref{eq:7} is the maximum-weight clique in the IDNC graph wherein the weight of a vertex $v_{um}$ is defined by:
\begin{align}
w(v_{um}) = \cfrac{M - m + 1}{1-p_u}.
\label{eq:6}
\end{align}
\label{th2}
\end{theorem}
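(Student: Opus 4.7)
The plan is to rewrite the objective of the online problem \eref{eq:7} in a form that exposes a per-user ``gain'' for every user successfully targeted by the transmitted combination $\kappa$, and then to argue that the set of simultaneously targetable (user, message) pairs is exactly a clique in the IDNC graph. Starting from $T_u(t,\kappa) = (\overline{W}_u + D_u(t) + \Delta D_u(t,\kappa))/(1-p_u)$, where $\Delta D_u(t,\kappa)$ is the delivery delay increment prescribed by \eref{eq:4}, the terms $\overline{W}_u$ and $D_u(t)$ are constants with respect to $\kappa$ and drop out of the $\arg\min$. Hence \eref{eq:7} is equivalent to
\begin{align}
\kappa^* = \arg\min_{\kappa \in \mathcal{P}(\mathcal{M})} \sum_{u \in \mathcal{U}} \cfrac{\Delta D_u(t,\kappa)}{1-p_u}.
\end{align}

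Next, I would partition $\mathcal{U}$ into the set $\mathcal{T}(\kappa)$ of \emph{targeted} users, for whom $\kappa$ is instantly decodable and delivers some wanted message $W_u^{k(u)}$, and its complement. By \eref{eq:4}, non-targeted users contribute the worst-case increment $M - W_u^1 + 1$, independent of $\kappa$ (up to which users belong to $\mathcal{T}(\kappa)$), while each targeted user contributes $W_u^{k(u)} - W_u^1$. Writing $W_u^{k(u)} - W_u^1 = (M - W_u^1 + 1) - (M - W_u^{k(u)} + 1)$ and adding and subtracting the worst-case over all users turns the minimization into the maximization
\begin{align}
\kappa^* = \arg\max_{\kappa \in \mathcal{P}(\mathcal{M})} \sum_{u \in \mathcal{T}(\kappa)} \cfrac{M - W_u^{k(u)} + 1}{1-p_u},
\end{align}
since the total $\sum_u (M - W_u^1 + 1)/(1-p_u)$ is a constant offset.

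The final step is the combinatorial identification. Each targeted user $u$ with delivered message $m = W_u^{k(u)}$ corresponds uniquely to a vertex $v_{um} \in \mathcal{V}$, whose weight matches \eref{eq:6}. A combination $\kappa$ targets a set of (user, message) pairs $\{(u,m_u)\}$ if and only if each $m_u \in \kappa \cap \mathcal{W}_u$ is the unique wanted message of $u$ in $\kappa$; by the adjacency rules of the IDNC graph (same wanted message, or each user has the other's message in its Has set), this is precisely the condition that the corresponding vertices are pairwise adjacent, i.e.\ form a clique. Conversely, any clique $\mathcal{K}$ in $\mathcal{G}$ can be realized by the combination $\kappa = \bigoplus_{v_{um} \in \mathcal{K}} m$, which is instantly decodable and delivers $m$ to $u$ for every $v_{um} \in \mathcal{K}$. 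Maximizing the sum of \eref{eq:6} over cliques then yields the theorem.

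The main obstacle I anticipate is the bijective correspondence in the last paragraph: one must verify that a feasible targeting profile never yields two distinct vertices $v_{um}$ and $v_{um'}$ sharing a user (forbidden by instant decodability, and forbidden in cliques because $v_{um}$ and $v_{um'}$ with $m \neq m'$ both in $\mathcal{W}_u$ are non-adjacent) and that the XOR of messages associated to a clique indeed delivers the intended message to every member, even when cliques mix the two adjacency rules. Carefully checking this equivalence is what converts the online problem \eref{eq:7} into the stated maximum-weight clique problem with weights \eref{eq:6}.
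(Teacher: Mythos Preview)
Your proposal is correct and follows essentially the same route as the paper: drop the $\kappa$-independent terms $\overline{W}_u$ and $D_u(t-1)$ from \eref{eq:7}, split users into targeted and non-targeted according to \eref{eq:4}, convert the minimization into $\arg\max_{\kappa} \sum_{u \in \tau(\kappa)} (M - W_u^k + 1)/(1-p_u)$, and then invoke the correspondence between feasible targeting profiles and cliques in the IDNC graph. The only difference is that the paper dispatches the final combinatorial identification by citing the known bijection between maximal cliques and feasible message combinations, whereas you spell out (and correctly flag as the main thing to verify) the two directions of that correspondence.
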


\begin{proof}
To show this theorem, the optimal message combination $\kappa$ is first expressed as a function of the targeted users. Afterward, using the bijection between the set of maximal cliques in the IDNC graph and the set of message combinations and targeted users, the message selection is expressed as a maximal clique search over the graph. To conclude the proof, the weight of the vertices is demonstrated to represent the objective function of \eref{eq:7}. A complete proof can be found in \appref{app2}.
\end{proof}

\section{Simulation Results} \label{sec:sim}

This section presents the simulation results assessing the performance of the proposed solution, denoted by \emph{minimum average delivery time} (Min-ADT), in the downlink of a radio access network. A large number of iteration is performed and the mean value of the delivery time, denoted by average delivery time, is presented. The number of users, messages, and erasure probabilities are variable in the simulations so as to show the performance of the different algorithms in various scenarios. The proposed solution is compared, in terms of delivery and completion times, against the following algorithms:
\begin{itemize}
\item The delivery time reduction algorithm introduced in \cite{512548}. The heuristic scheme, denoted by ``SSP-H", is based on the properties of the SSP formulation.
\item The completion time reduction algorithm introduced in \cite{13051412}. The heuristic reduces the completion time while ignoring the messages' order in the selection process.
\item The maximum clique selection algorithm introduced in \cite{6570827}. The algorithm selects the maximum clique over the IDNC graph and targets the maximum number of users with a new message for each transmission.
\end{itemize}

\begin{figure}[t]
\centering
% Requires \usepackage{graphicx}
\includegraphics[width=0.9\linewidth]{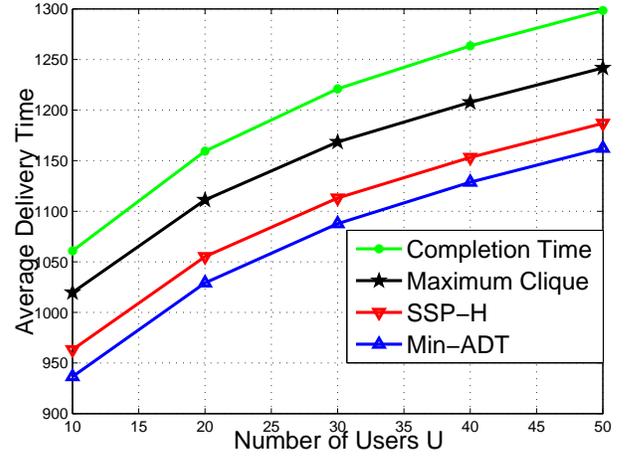}\\
\caption{Average delivery time versus the number of users $U$ for a network composed of $M=30$ messages and an average erasure probability $P=0.25$.}\label{fig:MDD}
\end{figure}

\begin{figure}[t]
\centering
% Requires \usepackage{graphicx}
\includegraphics[width=0.9\linewidth]{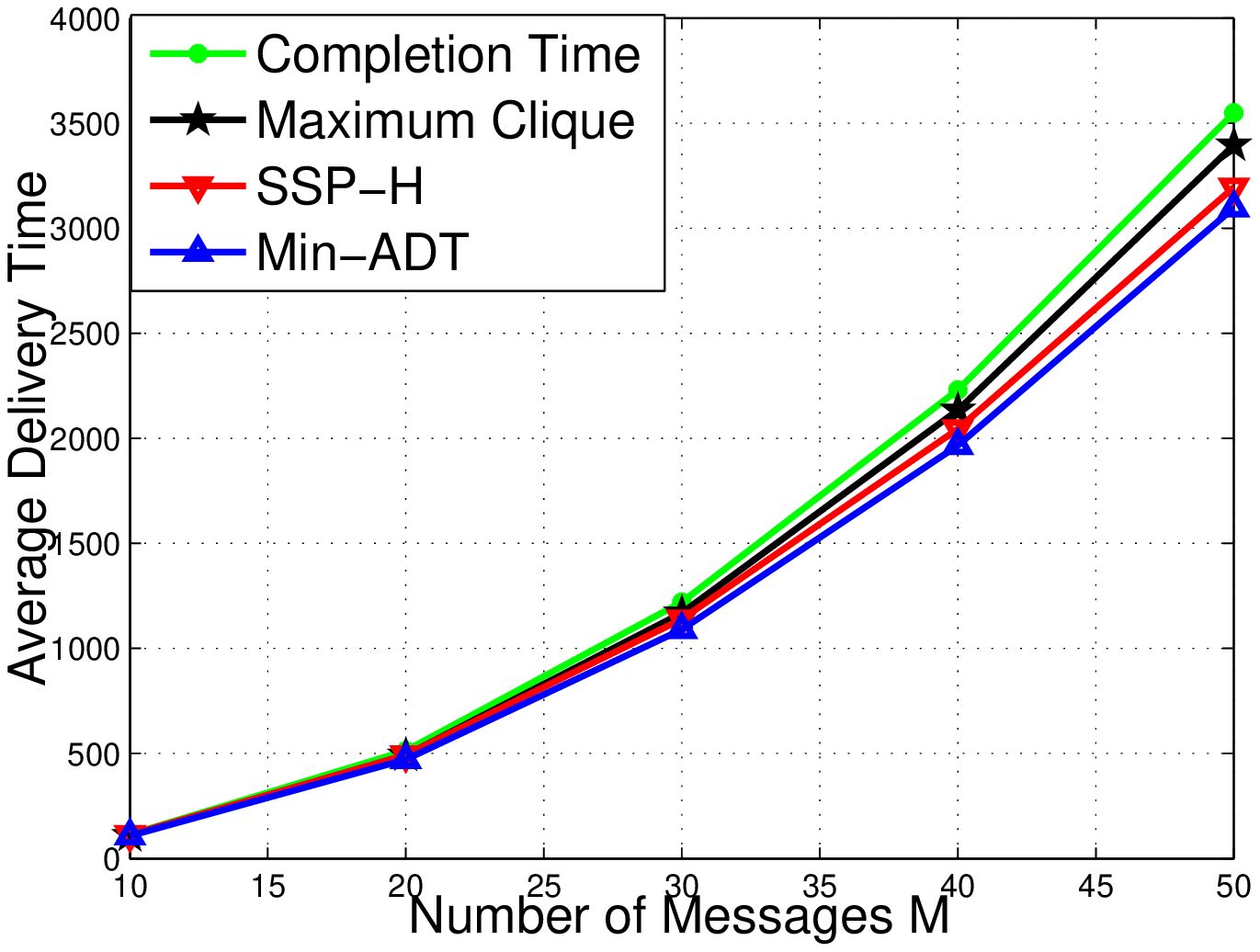}\\
\caption{Average delivery time versus the number of messages $M$ for a network composed of $U=30$ users and an average erasure probability $P=0.25$.}\label{fig:NDD}
\end{figure}

\begin{figure}[t]
\centering
\includegraphics[width=0.9\linewidth]{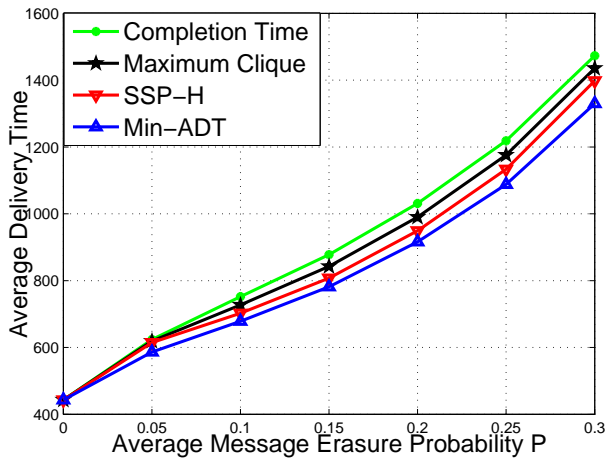}\\
\caption{Average delivery time versus the erasure probability $P$ for a network composed of $U=30$ users and $M=30$ messages.}\label{fig:PDD}
\end{figure}

\begin{figure}[t]
\centering
% Requires \usepackage{graphicx}
\includegraphics[width=0.9\linewidth]{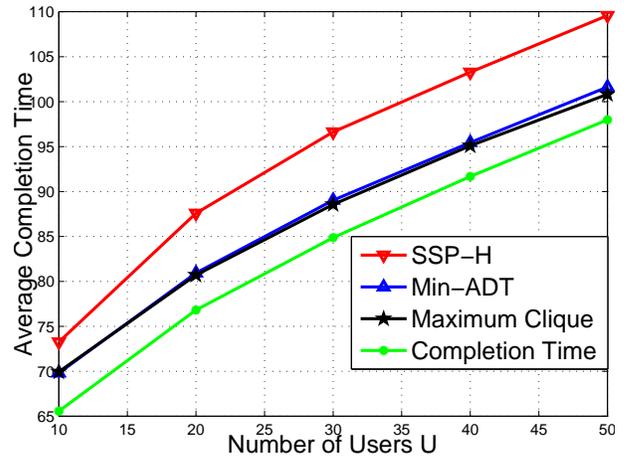}\\
\caption{Average completion time versus the number of users $U$ for a network composed of $M=30$ messages and an average erasure probability $P=0.25$.}\label{fig:MCT}
\end{figure}

\fref{fig:MDD} illustrates the delivery time achieved by the various algorithms versus the number of users $U$ for a network composed of $M=30$ messages and an average erasure probability $P=0.25$. The figure suggests that the proposed solution largely outperforms the three other schemes by achieving a smaller delivery time. In other words, the proposed solution achieves quickly in-order message delivery to the application layers of the users. For a fixed number of messages, the performance of the proposed algorithm degrades as the number of users increases. This can be explained by the fact that the delivery time approximation becomes less accurate as the number of users increases in the network. As both the completion time algorithm and the maximum clique solution do not consider the messages' order in the selection process, they poorly perform in reducing the delivery time.

\fref{fig:NDD} depicts the delivery time performances of the different algorithms versus the number of messages $M$ for a network composed of $U=30$ users and an average erasure probability $P=0.25$. The proposed solution achieves a lower delivery time for all number of messages. Moreover, the performance gap increases as the total number of messages in the network increases. This can be explained by the fact that as the number of messages increases, the coding opportunities generally increases. Such coding opportunities come in favor of the proposed solution as it efficiently selects the message combination by incorporating the delivery delay in the vertices' weigh as expressed \eref{eq:6}.

\fref{fig:PDD} shows the delivery time against different average erasure probabilities for a network composed of $U=30$ users and $M=30$ messages. As expected, the proposed solution outperforms other three algorithms, especially as the erasure probability increases. This can be explained by the fact that, as the erasure probability increases, the estimation of the delivery time becomes more accurate. In fact, as shown in \thref{th2}, the delivery time is approximated using the average number of erased transmissions. For large erasure probabilities, such approximation holds by the law of large number, resulting in a better performance of the proposed solution as compare to other schemes.

Finally, \fref{fig:MCT} presents the completion time achieved by different algorithms against the number of users $U$ for a network composed of $M=30$ messages and an average erasure probability $P=0.25$. As explained in \sref{sec:sys}, the completion time reflects the minimum number of transmissions so as to complete the reception of all messages to all users regardless of the messages' order. The figure clearly shows that the proposed solution, unlike SSP-H, presents a reasonable degradation in the completion time against the best-known completion time reduction heuristic while preserving the benefits of the delivery time reduction. The completion time reduction performance of the proposed solution is closely related to the Maximum Clique algorithm that serves the possible largest number of users with any new message in each transmission. In fact, the proposed solution, while reducing the delivery time, targets a large number of users.

\section{Conclusion} \label{sec:con}

Consider a radio access network wherein a base-station is required to deliver a set of order-constrained messages to a set of users over independent erasure channels. This paper proposes a novel method for solving the delivery time reduction problem for order-constrained applications using instantly decodable network coding. The notion of delivery delay is introduced as a measure of degradation against the optimal coding strategy in an erasure free scenario. The delivery time is, then, approximated by an anticipated version that incorporates the delivery delay. The delivery time reduction problem is reformulated using the delivery time-delay dependent expression and shown to be equivalent to a maximum weight clique selection problem over the IDNC graph. Simulation results show that the proposed solution provides an appreciable performance as compared to the best-known delivery time reduction heuristic. In addition to delivery time reduction benefit, the results further suggest that the proposed solution achieves a tolerable completion time degradation as compared to the best-known order unconstrained completion time reduction heuristic.

\appendices

\numberwithin{equation}{section}

\section{Proof of \thref{th1}}\label{app1}

The proof of this theorem goes as follows. The delivery time-delay expression is first derived for erasure free scenarios. In other words, the relationship is first established for $p_u=0, \ \forall \ u \in \mathcal{U}$. Afterward, the relationship is extended to transmissions with erasure by approximating the additional delivery time resulting from message erasures. The delivery time-delay expression \eref{eq:3} is demonstrated for a special schedule containing solely instantly decodable messages. Finally, it is extended to an arbitrary schedule by adding delay caused by non-instantly decodable transmissions.

Let $\mathcal{S}$ be a special transmission schedule containing only instantly decodable messages for the $u$-th user. Therefore, each transmission brings a new message to the user. Given that the user wants $M$ messages, it can easily be concluded that the schedule $\mathcal{S}$ contains $M$ transmissions. Hence, the schedule is a permutation of the $M$ messages. From its definition, the delivery time of the $u$-th user can be expressed as follows:
\begin{align}
T_u(\mathcal{S}) = \sum_{t=1}^{M-1} \left( M - W_u^1(t) + 1 \right),
\label{eq:app12}
\end{align}
where $W_u^1(t)$ is the first wanted message by the $u$-th user at the $t$-th transmission. Note that the last transmission in the schedule $\mathcal{S}$ brings the last instantly decodable message for the $u$-th user. Therefore, the user does not experience any delivery time increase in the last transmission.

Let $\kappa(t) \in \mathcal{M}$ be the message of the $t$-th transmission. From the analysis above, it can be concluded that $\bigcup_{t=1}^M \kappa(t) = \mathcal{M}$. Therefore, the delivery time of the $u$-th user in \eref{eq:app12} is given by the following expression:
\begin{align}
T_u(\mathcal{S}) &= \sum_{t=1}^{M-1} \left( M - W_u^1(t) + 1 + \kappa(t) - \kappa(t)\right) \nonumber \\
&= \sum_{t=1}^{M-1} \left(M - \kappa(t) + 1\right) + \sum_{t=1}^{M-1} \left(\kappa(t) - W_u^1(t) \right).
\label{eq:ap11}
\end{align}

The first term in \eref{eq:ap11} represents the minimum delivery time illustrated in \lref{lem1}, i.e., $\sum_{t=1}^{M-1} M - \kappa(t) + 1 = \overline{W}_u$. Therefore, to show that the expression \eref{eq:3} holds, it is sufficient to show that the second term represents the delivery delay $D_u(\mathcal{S})$. Given that all transmissions are instantly decodable in the schedule $\mathcal{S}$ and provided expression \eref{eq:4}, it can be inferred that $\kappa(t) - W_u^1(t) = D_u(t,\kappa)$. Therefore, the delivery time of the $u$-th user is:
\begin{align}
T_u(\mathcal{S}) &= \overline{W}_u + \sum_{t=1}^{M-1} D_u(t,\kappa) = \overline{W}_u + D_u(\mathcal{S}).
\end{align}

Now, consider an arbitrary schedule $\mathcal{S}$ with both instantly and non-instantly decodable messages. For the $u$-th user, the schedule can be decomposed into two schedules: the first $\mathcal{S}_p$ containing all instantly decodable transmissions for the $u$-th user and the second $\mathcal{S}_s$ containing all non-instantly decodable transmissions for that user. From the previous analysis in \eref{eq:app12} and \eref{eq:ap11}, the delivery time of the $u$-th user can be written as follows:
\begin{align}
T_u(\mathcal{S}) &= T_u(\mathcal{S}_p) + T_u(\mathcal{S}_s)\nonumber \\
&= \overline{W}_u + D_u(\mathcal{S}_p) + T_u(\mathcal{S}_s) \nonumber \\
&= \overline{W}_u + D_u(\mathcal{S}_p) + \sum_{t \in \mathcal{S}_s} M - W_u^1(t) +1.
\end{align}

Given that all transmission in $\mathcal{S}_s$ are non-instantly decodable for the $u$-th user, the first wanted message $W_u^1(t)$ remains unchanged in each of those transmissions. Therefore, for each non-instantly decodable message combination $\kappa$, the following equality holds: $M - W_u^1(t) +1 = D_u(t,\kappa)$. With this result, the delivery time of the $u$-th user can be defined as:
\begin{align}
T_u(\mathcal{S}) &= \overline{W}_u + D_u(\mathcal{S}_p) + \sum_{t \in \mathcal{S}_s} D_u(t,\kappa) \nonumber \\
&= \overline{W}_u + D_u(\mathcal{S}_p) + D_u(\mathcal{S}_s) \nonumber \\
&= \overline{W}_u + D_u(\mathcal{S}).
\end{align}

Having established the expression given in \eref{eq:3}, the analysis is now extended to the message erasure scenarios by approximating the delivery delay caused from all erased messages in schedule $\mathcal{S}$. For a schedule $\mathcal{S}$, let $\mathcal{E}_u(\mathcal{S})$ be the additional delivery time caused by the erased messages at the $u$-th user. Now, the delivery time is defined in terms of the minimum delivery time, the delivery delay, and the erased transmissions as follows:
\begin{align}
T_u(\mathcal{S}) = \overline{W}_u + D_u(\mathcal{S}) + \mathcal{E}_u(\mathcal{S})
\label{eq:app14}
\end{align}

Let $X_u(t)$ be a Bernoulli random variable indicating ($X_u(t)=1$) that the $t$-th transmission is erased at the $u$-th user. The additional delivery time caused by erased messages in schedule $\mathcal{S}$ can be expressed as:
\begin{align}
\mathcal{E}_u(\mathcal{S}) = \sum_{t=1}^{|\mathcal{S}|-1} (M-W_u^1(t)+1)X_u(t).
\label{feee}
\end{align}

Similar to the expression in \eref{eq:app12}, the last transmission is instantly decodable for the $u$-th user and thus, no delivery time increase occurs from that transmission. The expected value of the additional delivery delay caused by erased messages at the $u$-th user is:
\begin{align}
\mathds{E}[\mathcal{E}_u(\mathcal{S})] &= \mathds{E}[\sum_{t=1}^{|\mathcal{S}|-1} (M-W_u^1(t)+1)X_u(t)] \nonumber \\
&= \sum_{t=1}^{|\mathcal{S}|-1} (M-W_u^1(t)+1)\mathds{E}[X_u(t)] \nonumber \\
&= p_u \sum_{t=1}^{|\mathcal{S}|-1} (M-W_u^1(t)+1) \nonumber \\
&= p_u \mathcal{T}_u(\mathcal{S})
\label{eq:app13}
\end{align}

This paper proposes approximating the additional delivery time \eref{feee} by its average value in \eref{eq:app13}, i.e., $\mathcal{E}_u(\mathcal{S}) \approx \mathds{E}[\mathcal{E}_u(\mathcal{S})]$. Substituting and rearranging the terms of the expression \eref{eq:app14} gives the desired result:
\begin{align}
T_u(\mathcal{S}) \approx \cfrac{\overline{W}_u + D_u(\mathcal{S})}{1-p_u}.
\end{align}

\section{Proof of \thref{th2}}\label{app2}

The steps of the proof are the followings. The optimal message combination $\kappa^*$ is first expressed as a function of the targeted users. Afterward, using the bijection between the set of maximal cliques in the IDNC graph and the set of message combinations and targeted users, the message selection is expressed as a maximal clique search over the graph. To conclude the proof, the weight of the vertices is demonstrated to represent the objective function of \eref{eq:7}.

To begin with, note that the delivery time and the delay experienced in the previous transmissions are not function of the message combination $\kappa$ at the $t$-th transmission. Hence, the optimization problem \eref{eq:7} can be simplified in terms of the delivery delay and the erasure probabilities as follows:
\begin{align}
\kappa^* &= \arg \min_{\kappa \in \mathcal{P}(\mathcal{M})} \sum_{u \in \mathcal{U}} T_u(\kappa) \nonumber \\
&= \arg \min_{\kappa \in \mathcal{P}(\mathcal{M})} \sum_{u \in \mathcal{U}} \cfrac{\overline{W}_u + D_u(t,\kappa) + D_u(t-1)}{1-p_u} \nonumber \\
&= \arg \min_{\kappa \in \mathcal{P}(\mathcal{M})} \sum_{u \in \mathcal{U}} \cfrac{D_u(t,\kappa)}{1-p_u}
\label{eqw}
\end{align}

Let $U_w$ be the set of users with non-empty Wants set and $\tau(\kappa)$ be the set of targeted users that can instantly decode a new message from the combination $\kappa$. From the definition of the delivery delay in \eref{eq:4}, a targeted user $u$ experiences $W_u^k-W_u^1$ unit of delay increase, wherein $k$ is the new message of the $u$-th user in the combination $\kappa$. A non-targeted user $u$ by the combination $\kappa$ experiences $M-W_u^1+1$ unit of delay increase. Therefore, the optimal message combination in \eref{eqw} can be reformulated as follows:
\begin{align}
&\kappa^* = \arg \min_{\kappa \in \mathcal{P}(\mathcal{M})} \sum_{u \in \mathcal{U}} \cfrac{D_u(t,\kappa)}{1-p_u} \nonumber \\
&= \arg \min_{\kappa \in \mathcal{P}(\mathcal{M})} \sum_{u \in \tau(\kappa)}\cfrac{ W_u^k-W_u^1}{1-p_u} + \sum_{u \in U_w \setminus \tau(\kappa)} \cfrac{M-W_u^1+1}{1-p_u} \nonumber \\
&= \arg \max_{\kappa \in \mathcal{P}(\mathcal{M})} \sum_{u \in \tau(\kappa)} \cfrac{M-W_u^1+1}{1-p_u} - \sum_{u \in \tau(\kappa)}\cfrac{ W_u^k-W_u^1}{1-p_u} \nonumber \\
&= \arg \max_{\kappa \in \mathcal{P}(\mathcal{M})} \sum_{u \in \tau(\kappa)} \cfrac{M-W_u^k+1}{1-p_u}.
\end{align}

According to the analysis performed in \cite{6570827}, there exists a one-to-one mapping between the set of feasible message combinations and the set of maximal cliques in the IDNC graph. Let $\mathbf{C}$ be the set of maximal cliques in the IDNC graph. The optimal message combination can be expressed as follows:
\begin{align}
\kappa^* &= \arg \max_{\kappa \in \mathcal{P}(\mathcal{M})} \sum_{u \in \tau(\kappa)} \cfrac{M-W_u^k+1}{1-p_u} \nonumber \\
&= \arg \max_{ C \in \mathbf{C}} \sum_{v_{um} \in C} \cfrac{M-W_u^k+1}{1-p_u},
\end{align}
where $W_u^k$ is the intended message to the $u$-th user in the transmission of the maximal clique $C$. By construction of the IDNC graph, a vertex $v_{um}$ translates that the $u$-th user wants the $m$-th message. Given that a maximal clique is instantly decodable for all the users represented by that clique, the wanted message $W_u^k$ inducing vertex $v_{um}$ is the message $m$. Therefore, the optimization problem \eref{eq:6} can be expressed as:
\begin{align}
\max_{ C \in \mathbf{C}} \sum_{v_{um} \in C} \cfrac{M-m+1}{1-p_u} = \max_{ C \in \mathbf{C}} \sum_{v_{um} \in C} w(v_{um}).
\end{align}

Therefore, the optimal message combination is the maximum weight clique in the IDNC graph wherein the weights of vertices are defined in \eref{eq:6}.

\bibliographystyle{IEEEtran}
\bibliography{citations}

\end{document}